\newtheorem{theorem}{Theorem}
\theoremstyle{plain}
\newtheorem{definition}{Definition}
\newtheorem{proposition}{Proposition}
\numberwithin{equation}{section}
 \numberwithin{theorem}{section}
 \numberwithin{proposition}{section}
 \numberwithin{remark}{section}
 \numberwithin{definition}{section}
 \numberwithin{lemma}{section}
 \numberwithin{corollary}{section}
 \numberwithin{example}{section}
 \numberwithin{claim}{section}
\begin{document}
\title[Homogenization of variational inequalities]{%
Deterministic homogenization of variational inequalities  with unilateral constraints}
\author{Hermann Douanla}
\address{Hermann Douanla, Department of Mathematics, University of Yaounde
1, P.O. Box 812, Yaounde, Cameroon}
\email{hdouanla@gmail.com}
\author{Cyrille Kenne}
\address{Cyrille Kenne, Department of Mathematics, University of Yaounde
1, P.O. Box 812, Yaounde, Cameroon}
\email{kennestevec@gmail.com}
\date{November, 2018}
\subjclass[2010]{76M50, 35B40, 35B27,  35Q35 }
\keywords{deterministic homogenization, variational inequality, unilateral constraint, multiscale convergence}

\begin{abstract}
The article studies the reiterated homogenization of linear elliptic variational inequalities arising in problems with unilateral constrains. We assume that the coefficients of the equations satisfy and abstract hypothesis covering on each scale a large set of concrete deterministic behavior such as the periodic, the almost periodic, the convergence at infinity. Using the multi-scale convergence method, we derive a homogenization result whose limit problem is of the same type as the  problem with rapidly  oscillating coefficients.
\end{abstract}

\maketitle

\section{Introduction\label{S1}}
Let $\Omega$ be a bounded domain in $\mathbb{R}^N$ (integer $N\geq 2$) locally located on one side of its Lipschitz boundary $\partial\Omega$. Let $(\psi_\varepsilon)_{\varepsilon>0}$ be a sequence of functions in   $\{ u\in H^1(\Omega)$, $\gamma_0(u) \leq 0 \ \mbox{sur} \ \partial\Omega\}$ (where $\gamma_0 $ denotes the zero order  trace operator on $\partial\Omega$) that satisfies
\begin{equation}\label{eq001}
\psi_\varepsilon \to \psi_0\quad \text{ in } H^1(\Omega)\ \ \ \text{ as }\ \varepsilon\to 0.
\end{equation}
For fixed $\varepsilon>0$, we define 
$$
K_\varepsilon=\{v \in H^1_0(\Omega);\ \ v \geq \psi_\varepsilon \ \ \mbox{a.e., in  }\Omega\}.
$$
 Given $f\in L^2(\Omega)$, we study the asymptotic behavior (as $0<\varepsilon\to 0$) of the solution 
 to the following variational inequality
\begin{eqnarray} \label{eq1.1}
            \left \{ \begin{array}{lll}
            \mbox{Find} \ u_\varepsilon \in \ K_\varepsilon \ \ \mbox{such that} \\ 
          \displaystyle{\int_{\Omega}A(x, \frac{x}{\varepsilon},\frac{x}{\varepsilon^2})\nabla u_\varepsilon(x)\cdot\nabla (v(x)- u_\varepsilon(x))dx} &\geq& \displaystyle{\int_{\Omega} f(x)(v(x)- u_\varepsilon(x))dx} \\
           \mbox{for all} \ v \in K_\varepsilon,     
              \end{array}\right.
            \end{eqnarray} 
where the matrix $A=(a_{ij})_{1\leq i,j\leq N}$ with rapidly oscillating coefficients satisfies the following hypotheses: for any  $1\leq i,j\leq N$, we have 
\begin{eqnarray}
&&a_{ij}\in L^2(\Omega;\mathcal{B}(\mathbb{R}^N_y\times\mathbb{R}^N_z)),\label{eq11.2}\\
&&a_{ij}=a_{ji}, \label{eq11.3}
\end{eqnarray}
 and there exist two positive real constants $\alpha>0$ and $\beta>0$ such that for almost all $x\in \Omega$ and for all $(y,z)\in \mathbb{R}^N_y\times\mathbb{R}^N_z$
\begin{equation}\label{eq11.4}
\alpha |\xi|^2  \leq  \sum_{i,j=1}^N a_{ij}\xi_i\xi_j \leq \beta |\xi|^2\qquad\qquad \forall\xi=(\xi_i)_{1\leq i\leq N}\in\mathbb{R}^N.
\end{equation}

The problems of this form are usualy refered to as obstacle or unilateral problems \cite{DL}. Indeed, the set $K_\varepsilon$ in made up of functions constrained to stay  on one side of and obstacle modeled by the function $\psi_\varepsilon$.  The problems of type (\ref{eq1.1}) ussually  appears when modelling phenomena in plasticity theory, unilateral contact mechanics,  economics, and engineering (\cite{DL,oden}). This type of problems
may also arise in thermal diffusion problems or in biology, in the modeling of chemical flows in cells surrounded by semipermeable membranes \cite{CT}. Since the pioneering work of Signorini \cite{signorini}, the mathematical theory of variational inequalities with unilateral constrains has attracted the attention of a huge number of researchers (\cite{Brezis, CMT, DL, fichera,  oden, evariste, TS} and references therein). The limiting behavior of solutions to variational inequalities with constraints has also been studied by many scientists, see for exemple \cite{Caffarelli, Focardi, Sandrakov, CT} and references therein.

But since the development of the theory of deterministic homogenization hinted in \cite{ZK} and developed in e.g.,  \cite{Casado, gabihom1, BJMA}, see also \cite{DNW, DW1, NSW, SSW} where it has been utilized, it hasn't been used to address the homogenization of variational inequalities. Our main result formulates as follows.

\begin{theorem}\label{t1.1}
For any $\varepsilon >0$, let $u_\varepsilon$ be the unique solution to the variational inequality (\ref{eq1.1}) with the hypotheses (\ref{eq001}), (\ref{eq11.2})-(\ref{eq11.4}) and (\ref{eq33.1}). Then as $0< \varepsilon \longrightarrow 0$, we have $u_\varepsilon \longrightarrow u_0$ in $H^1_0(\Omega)$-weak, and strongly in $L^2(\Omega)$, where $u_0$ is the unique solution to the following homogenized variational inequality
\begin{eqnarray*}
     \left \{ \begin{array}{lll} 
     u_0\in K_0,\\ 
  \displaystyle{   \int_{\Omega}A^{*}(x)\nabla u_0\cdot\nabla (v_0-u_0)\ dx  \geq \int_{\Omega}f(v_0-u_0) \ dx },\\
  \mbox{for all}\ v_0\in K_0,
   \end{array}\right.
   \end{eqnarray*} 
where,
$K_0=\{ H^1_0(\Omega), v_0\geq \psi_0  \ a.e \ \mbox{in} \ \Omega\}$
 and where, for almost every $x\in \Omega$, the homogenized matrix is given by
$$
A^{*}(x)=M[A(I+\nabla_z \chi)(I+ \nabla_y \theta)],
$$
the functions $\chi$ and $\theta$ being the solutions to the microscopic problem (\ref{eqt17}) and the mesoscopic problem (\ref{eqt22}), respectively, while $M$ denotes the mean value operator. 
\end{theorem}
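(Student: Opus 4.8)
The plan is to follow the standard scheme for homogenizing elliptic variational inequalities by the multi-scale convergence method, paying particular attention to the unilateral constraint. First I would derive a uniform a priori bound on $u_\varepsilon$. Since $\gamma_0(\psi_\varepsilon)\le 0$, the positive part $\psi_\varepsilon^{+}=\max(\psi_\varepsilon,0)$ lies in $H^1_0(\Omega)$ and satisfies $\psi_\varepsilon^{+}\ge\psi_\varepsilon$, so $\psi_\varepsilon^{+}\in K_\varepsilon$ is an admissible test function. Inserting $v=\psi_\varepsilon^{+}$ into \eqref{eq1.1} and using the coercivity and boundedness in \eqref{eq11.4}, the $H^1$-convergence \eqref{eq001} (which bounds $\psi_\varepsilon^{+}$), together with the Poincar\'e and Young inequalities, yields $\|u_\varepsilon\|_{H^1_0(\Omega)}\le C$ with $C$ independent of $\varepsilon$. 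Then, invoking the compactness of multi-scale convergence guaranteed by the structure hypothesis \eqref{eq33.1}, together with Rellich's theorem, I extract a subsequence along which $u_\varepsilon\rightharpoonup u_0$ in $H^1_0(\Omega)$, $u_\varepsilon\to u_0$ strongly in $L^2(\Omega)$, and there exist correctors $u_1,u_2$ such that $\nabla u_\varepsilon$ multi-scale converges to $\nabla u_0+\nabla_y u_1+\nabla_z u_2$.

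Next I would check that the limit respects the constraint. From $u_\varepsilon\ge\psi_\varepsilon$ a.e.\ in $\Omega$ and the strong $L^2$ convergences $u_\varepsilon\to u_0$ and $\psi_\varepsilon\to\psi_0$, the inequality passes to the limit to give $u_0\ge\psi_0$ a.e.; since also $u_0\in H^1_0(\Omega)$, this shows $u_0\in K_0$.

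The heart of the argument, and the step I expect to be the main obstacle, is the passage to the limit in \eqref{eq1.1} for a fixed $v_0\in K_0$. The difficulty is that the natural oscillating test functions need not satisfy the unilateral constraint, so I must build a recovery sequence $v_\varepsilon\in K_\varepsilon$ that is admissible ($v_\varepsilon\ge\psi_\varepsilon$ and $v_\varepsilon\in H^1_0(\Omega)$) and multi-scale converges to $v_0$ together with the correctors generated by $\chi$ and $\theta$. I would first treat a dense class of smooth obstacles and smooth $v_0$, set $v_\varepsilon=v_0+\varepsilon v_1^{\varepsilon}+\varepsilon^{2}v_2^{\varepsilon}$ with suitably supported correctors $v_1^{\varepsilon},v_2^{\varepsilon}$, and add a small constraint-correction term that is shown to be negligible in the limit, before removing the extra regularity by density. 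Writing $a_\varepsilon(\cdot,\cdot)$ for the bilinear form in \eqref{eq1.1} and decomposing
\[
a_\varepsilon(u_\varepsilon,v_\varepsilon-u_\varepsilon)=a_\varepsilon(u_\varepsilon,v_\varepsilon)-a_\varepsilon(u_\varepsilon,u_\varepsilon)\ge\langle f,v_\varepsilon-u_\varepsilon\rangle,
\]
I would pass to the limit in the mixed term $a_\varepsilon(u_\varepsilon,v_\varepsilon)$ using the joint multi-scale convergence of the two gradients, and bound the diagonal term from below, $\liminf_\varepsilon a_\varepsilon(u_\varepsilon,u_\varepsilon)\ge$ the full multi-scale energy of $(u_0,u_1,u_2)$, by the weak lower semicontinuity of the convex functional $\xi\mapsto\int A\,\xi\cdot\xi$; the symmetry \eqref{eq11.3} is what makes these two limits compatible.

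Finally I would identify $u_1$ and $u_2$ as the solutions of the microscopic problem \eqref{eqt17} and the mesoscopic problem \eqref{eqt22}. Their Euler--Lagrange equations, together with the reiterated factorization (homogenizing first in $z$, then in $y$), collapse the full multi-scale energy of $(u_0,u_1,u_2)$ into $\int_\Omega A^{*}(x)\nabla u_0\cdot\nabla u_0\,dx$ with $A^{*}=M[A(I+\nabla_z\chi)(I+\nabla_y\theta)]$. Combining this identification with the three estimates above produces $\int_\Omega A^{*}\nabla u_0\cdot\nabla(v_0-u_0)\,dx\ge\int_\Omega f(v_0-u_0)\,dx$ for every $v_0\in K_0$, i.e.\ the homogenized inequality. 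Because $A^{*}$ inherits coercivity from $\alpha$ in \eqref{eq11.4}, the Lions--Stampacchia theorem gives a unique $u_0\in K_0$; since this limit is independent of the extracted subsequence, the whole family $(u_\varepsilon)$ converges, which completes the proof.
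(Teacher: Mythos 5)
Your overall scheme --- a priori bound via the test function $\psi_\varepsilon^{+}$, multiscale compactness, identification of $u_1,u_2$ through the reiterated cell problems, and uniqueness via Lions--Stampacchia to upgrade subsequential to full convergence --- coincides with the paper's. The genuine divergence is at the step you yourself single out as the main obstacle: the passage to the limit under the moving constraint. You keep the $\varepsilon$-dependent convex sets $K_\varepsilon$ and propose a Mosco-type recovery sequence $v_\varepsilon\in K_\varepsilon$ built from $v_0+\varepsilon v_1^\varepsilon+\varepsilon^2 v_2^\varepsilon$ plus an unspecified ``constraint-correction term,'' combined with weak lower semicontinuity of the energy on the diagonal term. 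The paper instead translates the unknown by the obstacle, setting $\hat u_\varepsilon=u_\varepsilon-\psi_\varepsilon$, which converts \eqref{eq1.1} into a variational inequality over the \emph{fixed} convex set $K=\{v\in H^1(\Omega):v\ge 0\}$; the oscillating test functions $\hat v_0+\varepsilon\hat v_1+\varepsilon^2\hat v_2$ with $\hat v_0\ge 0$ are then taken as admissible for small $\varepsilon$, the limit inequality is obtained in the translated variables, and the substitution $u_0=\hat u_0+\psi_0$ is undone at the end. The translation buys a trivial admissibility check and is exactly where hypothesis \eqref{eq001} (strong $H^1$ convergence of $\psi_\varepsilon$) enters; your route buys independence from that trick but leaves real work undone.

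That undone work is the one concrete gap in your plan: since $\psi_\varepsilon\to\psi_0$ only in $H^1(\Omega)$, the inequality $v_0\ge\psi_0$ gives no pointwise control of $v_0-\psi_\varepsilon$, so your correction term must absorb both the possible negativity of $\varepsilon v_1^\varepsilon+\varepsilon^2 v_2^\varepsilon$ near the contact set and the motion of the obstacle itself, while remaining in $H^1_0(\Omega)$ and vanishing in the multiscale limit; none of this is constructed. If you adopt the paper's substitution you avoid the obstacle's motion entirely and only the first issue remains (which the paper, it must be said, also dispatches rather briskly by taking $\varepsilon$ small). Your explicit $\liminf$/lower-semicontinuity treatment of $a_\varepsilon(u_\varepsilon,u_\varepsilon)$ is sound and in fact spells out what the paper leaves implicit in its one-line ``passing to the limit.''
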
  

The paper is organized as follows: In Section\,2 we briefly present the concepts of algebra with mean value and that of multiscale convergence. Section\,3 deals with estimates and the passage to the limit while the main result is proved in Section\,4. Finally, for the sake of simplicity, vector spaces are consider over $\mathbb{R}$.
\section{Algebras with mean value and multiscale convergence}
In this section, we recall the concept of algebra with mean value \cite{BJMA,ZK} and that of multiscale convergence \cite{ AB96, gabi89, gabihom1, BJMA, SSW}. A detail treatment of the results in this section may be found in \cite{BJMA}.
\subsection{Algebra with mean value} Let $\mathcal{H} = (H_\varepsilon)_{\varepsilon>0}$ be the  action of $\mathbb{R}^*_+$ (the multiplicative group of positive real numbers) on the numerical space $\mathbb{R}^N$ defined by
\begin{equation}
H_\varepsilon(x)=\frac{x}{\zeta(\varepsilon)}\qquad (x\in\mathbb{R}^N)
\end{equation}
where $\zeta$ is a strictly positive function of $\varepsilon$ tending to zero together with $\varepsilon$. For $u\in L^2_{loc}(\mathbb{R}^N)$ and $\varepsilon>0$, we defined
\begin{equation}
u^\varepsilon(x)=u(H_\varepsilon(x))\qquad (x\in\mathbb{R}^N),
\end{equation}
a function lying in $L^2_{loc}(\mathbb{R}^N)$. A bounded uniformly continuous real-valued function on $\mathbb{R}^N$ possesses a mean value for $\mathcal{H}$ if the sequence $(u^\varepsilon)_{\varepsilon>0}$ weakly* converges in $L^\infty(\mathbb{R}^N)$. An algebra with mean value for $\mathcal{H}$ (algebra wmv, in short) on $\mathbb{R}^N$ is defined to be a closed subalgebra of the algebra of bounded uniformly continuous real-valued function on $\mathbb{R}^N$, BUC($\mathbb{R}^N$), which contains the constants, is translation invariant and is such that any of its element possesses a mean value. Now, Let $A$ be an algebra wmv. The mean value of $u\in A$, denoted hereafter by $M(u)$, writes as  
\begin{equation}
M(u)=\lim_{R\to +\infty}\frac{1}{|B_R|}\int_{B_R}u(y)dy
\end{equation} 
where $B_R$ stands for the bounded open ball in $\mathbb{R}^N$ centered at the origin and with radius $R$, and $|B_R|$ denotes its Lebesgue measure. Indeed, let $R$ be a positive number and set $\zeta(\varepsilon)=\frac{1}{R}$. Then $R\to +\infty$ as $\varepsilon\to 0$. Since $u^\varepsilon\to M(u)$ in $L^\infty(\mathbb{R}^N)$-weak*, we have $\int_{\mathbb{R}^N} u^\varepsilon\chi_{B_1}dx\to M(U)|B_1|$ as $R\to +\infty$, where $B_1$ denotes the unit ball in $\mathbb{R}^N$ and $\chi_{B_1}$ its characteristic function. But $\int_{\mathbb{R}^N} u^\varepsilon\chi_{B_1}dx=\int_{B_1}u(Rx)dx$, and a change of variable $y=Rx$ gives 
$$
\frac{1}{|B_1|}\int_{B_1}u(Rx)dx=\frac{1}{R^N|B_1|}\int_{B_R}u(y)dy = \frac{1}{|B_R|}\int_{B_R}u(y)dy,
$$
and our claim is justified.

We recall that given an algebra wmv $A$ and a positive integer $m\geq 0$, we define regular subalgebras of $A$ by setting
$$
A^m=\{\psi\in\mathcal{C}^m(\mathbb{R}^N) : D^\alpha \psi \in A,\  \forall \alpha=(\alpha_1,\cdots,\alpha_N)\in\mathbb{N}^N \ \mathrm{ with }\ |\alpha|\leq m\},
$$
where $D^\alpha\psi=\frac{\partial^{|\alpha|}\psi}{\partial_{y_1}^{\alpha_1}\cdots\partial_{y_N}^{\alpha_N}}$. For finite $m$ the norm $\||u|\|_m=\sup_{|\alpha|\leq m}\|D^\alpha\psi\|_\infty$, makes $A^m$ a Banach space. We also define 
$$
A^\infty=\{\psi\in\mathcal{C}^\infty(\mathbb{R}^N) : D^\alpha \psi \in A,\  \forall \alpha=(\alpha_1,\cdots,\alpha_N)\in\mathbb{N}^N \},
$$
a Fréchet space when endowed with the locally convex topology defined by the family of norms $\||\cdot|\|_m$. Moreover the space $A^\infty$  is dense in any $A^m$ ($m\in\mathbb{N}$).

As we are concerned in this work with reiterated homogenization, the notion of product algebra wmv will be useful. In this direction, we first recall some facts about  vector-valued algebra wmv.  Let $F$ be a Banach space. We recall that $\mathrm{BUC}(\mathbb{R}^N;F)$, the space of bounded uniform continuous functions $u:\mathbb{R}^N\to F$, is a Banach space when endowed with the following norm
$$
\|u\|_\infty= \sup_{y\in\mathbb{R}^N}\|u(y)\|_F \qquad \qquad (u\in \mathrm{BUC}(\mathbb{R}^N;F)),
$$
where $\|\cdot\|_F$ stands for the norm in F. Let $A$ be an algebra wmv, we denote by $A\bigotimes F$ the space of functions of the form
$$
\sum_{finite}u_i\otimes e_i,\quad  \ u_i\in A\  \mathrm{ and }\ e_i\in F,
$$
where the function $u_i\otimes e_i$ is defined by $(u_i\otimes e_i) (y)=u_i(y)e_i$ for $y\in\mathbb{R}^N$. This being so, we define the vector-valued algebra wmv $A(\mathbb{R}^N ; F)$ as the closure of $A\bigotimes F$ in $\mathrm{BUC}(\mathbb{R}^N;F)$.

We can now introduce the notion of product algebra wmv. Let $N_1$ and $N_2$ be two strictly positive integers and let $A_y$ and $A_z$ be two algebras wmv on $\mathbb{R}_y^{N_1}$  and $\mathbb{R}_z^{N_2}$, respectively. The product algebra wmv of $A_y$ and $A_z$ is denoted by $A_y\odot A_z$ and defines as the vector-valued algebra wmv $A_y(\mathbb{R}_y^{N_1};A_z)$ \ $(\ \equiv A_z(\mathbb{R}_z^{N_2};A_y) $. Obviously, $A=A_y\odot A_z$ is an algebra wmv on $\mathbb{R}^{N_1 + N_2}_{y,z}$.

We recall that the vector-valued Marcinkiewicz space $\mathfrak{M}^{2}(%
\mathbb{R}^{N};F)$ is defined as the set of functions $u\in L_{loc}^{2}(\mathbb{R}%
^{N};F) $ such that 
\begin{equation*}
\underset{R\rightarrow +\infty }{\lim \sup }\frac{1}{|B_R|}
\!\int_{B_{R}}\left\Vert u(y)\right\Vert_F^{2}dy<\infty .
\end{equation*}%
Endowed with the seminorm 
\begin{equation*}
\left\Vert u\right\Vert _{2,F}=\left( \underset{R\rightarrow +\infty }{\lim
\sup }%
\frac{1}{|B_R|}\!\int_{B_{R}}\left\Vert u(y)\right\Vert_F ^{2}dy\right) ^{1/2},
\end{equation*}
$\mathfrak{M}^{2}(\mathbb{R}^{N}, F)$ is a complete seminormed space containing $A(\mathbb{R}^N,F)$. Next, the \textit{generalized Besicovitch} \textit{space} $B_{A}^{2}(%
\mathbb{R}^{N})$ is defined to be the closure of the vector-valued algebra wmv $A(\mathbb{R}^N;F)$ in $\mathfrak{M}^{2}(\mathbb{R}^{N};F)$. Then the following holds true:
\begin{itemize}
\item[(i)] The mean value $M : A(\mathbb{R}^N;F)\to F$ extends by continuity to a continuous linear mapping (still denoted by $M$) on $B^2_A(\mathbb{R}^N;F)$ satisfying $ T(M(u))=M(T(u))$ for all $T\in F'$ and all $u\in B^2_A(\mathbb{R}^N;F)$. Moreover, for $u\in B^2_A(\mathbb{R}^N;F)$ we have
$$
\|u\|_{2,F}=\left[ M(\|u\|^2_F) \right]^\frac{1}{2}=\left[\lim_{R\to +\infty}\frac{1}{|B_R|}\int_{B_R}\|u(y)\|^2_F dy\right]^\frac{1}{2}
$$ 
\item[(ii)] Let $\mathcal{N}=\{u\in B^2_A(\mathbb{R}^N;F) : \|u\|_{2,F}=0\}$. Then the quotient space $\mathcal{B}^2_A(\mathbb{R}^N;F)=B^2_A(\mathbb{R}^N;F)/\mathcal{N}$ is a Banach space when endowed with the norm 
$$
\|u+\mathcal{N}\|_{2,F}=\|u\|_{2,F}\qquad \text{for }u\in B^2_A(\mathbb{R}^N;F).
$$
Moreover, if $F$ is a Hilbert space, then so is  $\mathcal{B}^2_A(\mathbb{R}^N;F)$ with inner product
\begin{equation}
(u,v)_2=M\left[ (u,v)_F \right]\qquad \text{for }\ \ u,v\in \mathcal{B}^2_A(\mathbb{R}^N;F).
\end{equation} 
\end{itemize}
The standard case $F=\mathbb{R}$ is of particular interest. In this case we simplify the notations and write $B^2_A(\mathbb{R}^N),\ \mathcal{B}^2_A(\mathbb{R}^N)$ and $\|\cdot\|_{2}$ in place of $B^2_A(\mathbb{R}^N;\mathbb{R}),\ \mathcal{B}^2_A(\mathbb{R}^N;\mathbb{R})$ and $\|\cdot\|_{2, \mathbb{R}}$, respectively. Also, we recall that the space 
\begin{equation*}
B_{A}^{1,2}(\mathbb{R}^{N})=\{u\in B_{A}^{2}(\mathbb{R}^{N}):\nabla _{y}u\in
(B_{A}^{2}(\mathbb{R}^{N}))^{N}\}
\end{equation*}%
endowed with the seminorm 
\begin{equation*}
\left\Vert u\right\Vert_{1,2}=\left( \left\Vert u\right\Vert_2\right)^{\frac{1}{2}},
\end{equation*}%
which is a complete seminormed space. Its Banach counterpart is defined as follows.
$$
\mathcal{B}_{A}^{1,2}(\mathbb{R}^{N})=\{u\in \mathcal{B}_{A}^{2}(\mathbb{R}^{N}): \overline{\nabla }_{y}u\in
(B_{A}^{2}(\mathbb{R}^{N}))^{N}\},
$$
where $\overline{\nabla}_y=(\overline{\partial}/\partial y_i)_{i=1,\cdots,N}$ and  $\overline{%
\partial }/\partial y_{i}$ is defined by 
\begin{equation}
\frac{\overline{\partial }}{\partial y_{i}}(u+\mathcal{N}):=\frac{\partial u%
}{\partial y_{i}}+\mathcal{N}\quad \text{ for }u\in B_{A}^{1,2}(\mathbb{R}^{N}).
\label{eq2.5}
\end{equation}
It is important to note that $\overline{\partial }/\partial y_{i}$ is also
defined as the infinitesimal generator in the $i$th direction coordinate of
the strongly continuous group $\mathcal{T}(y):\mathcal{B}_{A}^{2}(\mathbb{R}%
^{N})\rightarrow \mathcal{B}_{A}^{2}(\mathbb{R}^{N});\ \mathcal{T}(y)(u+%
\mathcal{N})=u(\cdot +y)+\mathcal{N}$. Let us denote by $\varrho :B_{A}^{2}(%
\mathbb{R}^{N})\rightarrow \mathcal{B}_{A}^{2}(\mathbb{R}^{N})=B_{A}^{2}(%
\mathbb{R}^{N})/\mathcal{N}$, $\varrho (u)=u+\mathcal{N}$, the canonical
surjection. We remark that if $u\in B_{A}^{1,2}(\mathbb{R}^{N})$ then $%
\varrho (u)\in \mathcal{B}_{A}^{1,2}(\mathbb{R}^{N})$ with (see (\ref{eq2.5}))
\begin{equation*}
\frac{\overline{\partial }\varrho (u)}{\partial y_{i}}=\varrho \left( \frac{%
\partial u}{\partial y_{i}}\right).
\end{equation*}

As pointed out in \cite[Remark\,2.13]{NSW}, if there exist translation invariant  elements in the algebra wmv that are not constants, the result obtained after the homogenization process might be useless. 

\begin{definition} An algebra wmv $A$ on $\mathbb{R}^N$ is said to be ergodic if any $u\in \mathcal{B}^2_A(\mathbb{R}^N)$ such that $\mathcal{T}(y)u=u$ for every $y$ in $\mathbb{R}^N$, is a constant. 
\end{definition}
 
Let us  give some examples of algebra wmv. We denote by $AP (\mathbb{R}^{N})$ the space of all Bohr almost periodic functions. The space  $AP (\mathbb{R}^{N})$ is the algebra of functions on $\mathbb{R}^{N}$ that are uniform approximations of finite linear combinations of functions in the set $\{y\mapsto \cos(2\pi k\cdot y), y\mapsto \sin(2\pi k\cdot y),\  k\in\mathbb{R}^{N}\}$.  It is well known that $AP (\mathbb{R}^{N})$ is an ergodic algebra wmv called the almost periodic algebra wmv on $\mathbb{R}^{N}$. We also recall that $\mathcal{C}_{per}(Y)$ the space of continuous $Y=(0,1)^N-$periodic functions on $\mathbb{R}^N$ is an ergodic algebra wmv on $\mathbb{R}^N$. The space $\mathcal{B}_\infty(\mathbb{R}^N)$ of continuous functions on $\mathbb{R}^N$ that converge at infinity is an ergodic algebra wmv. That is the space of all function $u\in\mathcal{B}(\mathbb{R}^N)$ such that $\lim_{|y|\to \infty}u(y)\in\mathbb{R}$. In this case the mean value reduces to  $M(u)=\lim_{|y|\to \infty}u(y)$.

Now, owing to \cite[Theorem~2.2]{NSW}, the following equalities holds: $AP (\mathbb{R}^{N_1}_y)\odot AP (\mathbb{R}^{N_2}_z)=AP (\mathbb{R}^{N_1}_y\times \mathbb{R}^{N_2}_z)$, $\mathcal{C}_{per}(Y)\odot \mathcal{C}_{per}(Z)=\mathcal{C}_{per}(Y\times Z)$ (with $Y=(0,1)^{N_1}$ and $Z=(0,1)^{N_2}$) and $\mathcal{C}_{per}(Y)\odot AP (\mathbb{R}^{N}_z) = \mathcal{C}_{per}(Y; AP (\mathbb{R}^{N}_z))$. Many more examples may be provided, see e.g., \cite[Section\,2.3]{douanlacma} and \cite[Section\,3]{SSW}.

We assume in the sequel that all algebras wmv are ergodic. To the space $B_{A}^{2}(\mathbb{R}^{N})$ we attach the following \textit{corrector} space 
\begin{equation*}
B_{\#A}^{1,2}(\mathbb{R}^{N})=\{u\in W_{loc}^{1,2}(\mathbb{R}^{N}):\nabla
u\in B_{A}^{2}(\mathbb{R}^{N})^{N}\text{ and }M(\nabla u)=0\}\text{.}
\end{equation*}%
Two elements of $B_{\#A}^{1,2}(\mathbb{R}^{N})$ are identify by their
gradients, viz,  $u=v$ in $B_{\#A}^{1,2}(\mathbb{R}^{N})$ if and only if $\nabla (u-v)=0$,
i.e. $\left\Vert \nabla (u-v)\right\Vert _{2}=0$. We may therefore equip $%
B_{\#A}^{1,2}(\mathbb{R}^{N})$ with the gradient norm $\left\Vert
u\right\Vert _{\#,2}=\left\Vert \nabla u\right\Vert _{2}$. This defines a
Banach space \cite[Theorem 3.12]{Casado} containing $B_{A}^{1,2}(\mathbb{R}%
^{N})$ as a subspace.

\subsection{The multiscale convergence} Let $A_y$ (resp. $A_z$) be an ergodic algebra wmv on $\mathbb{R}^N_y$ (resp. $\mathbb{R}^N_z$) for the action $\mathcal{H'}=(H'_\varepsilon)_{\varepsilon>0}$ (resp. $\mathcal{H}''=(H''_\varepsilon)_{\varepsilon>0}$)  of $\mathbb{R}^*_+$ on $\mathbb{R}^N$ given by  $H'_\varepsilon(x)=\frac{x}{\varepsilon}$ (resp. $H''_\varepsilon(x)=\frac{x}{\varepsilon^2}$) and let $A=A_y \odot A_z$  be their product, an algebra wmv on $\mathbb{R}^N\times\mathbb{R}^N$ for the product action $\mathcal{H}=\mathcal{H}'\times \mathcal{H}''$  of $\mathbb{R}^*_+$ on $\mathbb{R}^{2N}=\mathbb{R}^N\times\mathbb{R}^N$ given by $\mathcal{H}=(H_\varepsilon)_{\varepsilon>0}$, with
$$
H_\varepsilon(y,z)=\left( \frac{y}{\varepsilon},\frac{z}{\varepsilon^2} \right) \quad \text{ for } \ y,z\in\mathbb{R}^N \text{ and } \varepsilon>0.
$$

The mean value on $\mathbb{R}^N$ for the actions $\mathcal{H}'$, $\mathcal{H}''$ and $\mathcal{H}$ are respectively denoted by $M_y$, $M_z$ and  $M$. Also, the letter $E$ denote throughout an ordinary sequence of strictly positive real numbers admitting zero as accumulation point. Finally, let $\Omega$ be throughout this section a nonempty open subset of $\mathbb{R}^N$.

\begin{definition}
A sequence $(u_\varepsilon)_{\varepsilon>0} \subset L^2(\Omega)$ is said to weakly multiscale converge in  $L^2(\Omega)$ to some $u_0\in L^2(\Omega; \mathcal{B}^2_A(\mathbb{R}^{2N}))$ if as $E\ni \varepsilon \longrightarrow 0$, we have
\begin{equation*}
\int_{\Omega}u_\varepsilon(x)v(x,\frac{x}{\varepsilon},\frac{x}{\varepsilon^2}) \ dx \longrightarrow \int_{\Omega}M(u(x,-)v(x,-)) \ dx
\end{equation*}
for every $v \in L^{2}(\Omega,A)$, where for a.e. $x\in \Omega$, $v(x,-)(y,z)=v(x,y,z)$ for $(y,z)\in \mathbb{R}^N \times \mathbb{R}^N$. We express this by $u_\varepsilon \xrightarrow{w-ms} u_0$ in $L^2(\Omega)$.
\end{definition}

\begin{definition}
A sequence $(u_\varepsilon)_{\varepsilon>0} \subset L^2(\Omega)$ is said to strongly multiscale converge in  $L^2(\Omega)$ to some $u_0\in L^2(\Omega; \mathcal{B}^2_A(\mathbb{R}^{2N}))$ if as $E\ni \varepsilon \longrightarrow 0$, we have $u_\varepsilon \xrightarrow{w-ms} u_0$ and $\|u_\varepsilon\|_{L^2(\Omega)} \to \|u_0\|_{L^2(\Omega; \mathcal{B}^2_A(\mathbb{R}^{2N}))}$. We express this by $u_\varepsilon \xrightarrow{s-ms} u_0$ in $L^2(\Omega)$.
\end{definition}

Without the following compactness theorems, the multiscale convergence theory would be of no interest. 

\begin{theorem}\label{t2.1}
Let $(u_\varepsilon)_{\varepsilon\in E}$ be a bounded sequence in $L^2(\Omega)$. Then there exist a subsequence $E'$ of E and a function $u\in  L^2(\Omega; \mathcal{B}^2_A(\mathbb{R}^{2N}))$ such that $u_\varepsilon \xrightarrow{w-ms} u_0$ in $L^2(\Omega)$ as $E'\ni\varepsilon\to 0$.
\end{theorem}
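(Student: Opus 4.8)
The plan is to realize the weak multiscale limit as the Riesz representative of a bounded linear functional on the Hilbert space $H:=L^2(\Omega;\mathcal{B}^2_A(\mathbb{R}^{2N}))$, obtained as a subsequential limit of the sampling functionals attached to $(u_\varepsilon)_{\varepsilon\in E}$. For $v\in L^2(\Omega;A)$ write $v^\varepsilon(x)=v(x,x/\varepsilon,x/\varepsilon^2)$ and introduce
$$
L_\varepsilon(v)=\int_\Omega u_\varepsilon(x)\,v^\varepsilon(x)\,dx .
$$

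The decisive ingredient is the mean value property of the product action $\mathcal{H}=\mathcal{H}'\times\mathcal{H}''$: for every $v\in L^2(\Omega;A)$ one has
$$
\int_\Omega |v^\varepsilon(x)|^2\,dx \longrightarrow \int_\Omega M\big(|v(x,\cdot)|^2\big)\,dx = \|v\|_H^2\qquad\text{as } E\ni\varepsilon\to 0 .
$$
I would establish this first on finite tensors of the form $\varphi\otimes(a\otimes b)$ with $\varphi\in\mathcal{C}(\overline{\Omega})$, $a\in A_y$ and $b\in A_z$, where it reduces to the defining weak-$*$ convergence $u^\varepsilon\rightharpoonup M(u)$ carried out simultaneously on the two well-separated scales $x/\varepsilon$ and $x/\varepsilon^2$, and then pass to a general $v$ using the density of such tensors in $L^2(\Omega;A)$ together with the continuity of the seminorm. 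Combined with Cauchy--Schwarz and the bound $\sup_\varepsilon\|u_\varepsilon\|_{L^2(\Omega)}\le C$, this yields the uniform estimate $|L_\varepsilon(v)|\le \|u_\varepsilon\|_{L^2(\Omega)}\,\|v^\varepsilon\|_{L^2(\Omega)}\le C\,\|v\|_H$ for all small $\varepsilon$.

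Next I would extract. Since $A$, and hence $\mathcal{B}^2_A(\mathbb{R}^{2N})$ and $H$, is separable, choose a countable set $\{v_k\}$ that is dense in $H$ and contained in $L^2(\Omega;A)$; note that $L^2(\Omega;A)$ is itself dense in $H$ because $A(\mathbb{R}^{2N})$ is dense in $\mathcal{B}^2_A(\mathbb{R}^{2N})$ by construction. Each scalar sequence $(L_\varepsilon(v_k))_\varepsilon$ is bounded by the previous step, so a Cantor diagonal argument produces a subsequence $E'\subset E$ along which $L_\varepsilon(v_k)$ converges for every $k$. The uniform estimate then propagates the convergence to all $v\in L^2(\Omega;A)$, giving $L_\varepsilon(v)\to L(v)$ as $E'\ni\varepsilon\to 0$ with $L$ linear and $|L(v)|\le C\|v\|_H$, so $L$ extends uniquely to a bounded linear functional on $H$.

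Finally, since $H$ is a Hilbert space, the Riesz representation theorem furnishes a unique $u\in H=L^2(\Omega;\mathcal{B}^2_A(\mathbb{R}^{2N}))$ with $L(v)=(u,v)_H$ for all $v\in H$. Recalling that $(u,v)_H=\int_\Omega M\big(u(x,\cdot)\,v(x,\cdot)\big)\,dx$, this identity read on $L^2(\Omega;A)$ is precisely the defining relation of $u_\varepsilon\xrightarrow{w\text{-}ms}u$ in $L^2(\Omega)$, which completes the argument. The main obstacle is the mean value lemma of the second paragraph: one must treat the two widely separated scales $x/\varepsilon$ and $x/\varepsilon^2$ at once and check that the sampled product $v^\varepsilon$ carries the expected limiting energy, which is exactly where the product-algebra structure $A=A_y\odot A_z$ and its mean value $M$ enter; the extraction and representation steps are then soft.
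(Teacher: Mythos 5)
The paper itself offers no proof of Theorem \ref{t2.1}: it is recalled from the literature, with the reader sent to \cite{BJMA} (see also \cite{Casado, gabihom1, SSW}). Measured against the proof given there, your argument is essentially the standard one: the sampling functionals $L_\varepsilon$, the asymptotic identity $\|v^\varepsilon\|_{L^2(\Omega)}\to\|v\|_{L^2(\Omega;\mathcal{B}^2_A(\mathbb{R}^{2N}))}$ for $v\in L^2(\Omega;A)$ (legitimate to quote here, since $|v|^2$ again lies in $L^1(\Omega;A)$ and elements of the product algebra $A=A_y\odot A_z$ possess, by construction, a mean value for the reiterated action $H_\varepsilon(y,z)=(y/\varepsilon,z/\varepsilon^2)$), a diagonal extraction, factorization of the limit functional through the null space $\mathcal{N}$, and Riesz representation in the Hilbert space $L^2(\Omega;\mathcal{B}^2_A(\mathbb{R}^{2N}))$. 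The limit-propagation step also works as you describe, because $\limsup_\varepsilon|L_\varepsilon(v-v_k)|\le C\|v-v_k\|_H$ even though the pointwise-in-$\varepsilon$ bound by $\|\cdot\|_H$ is only asymptotic.

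The one step you cannot pass over is the sentence ``Since $A$, and hence $\mathcal{B}^2_A(\mathbb{R}^{2N})$ and $H$, is separable.'' Separability is not part of the paper's definition of an algebra with mean value, and it genuinely fails for one of the paper's flagship examples: $AP(\mathbb{R}^N)$ is non-separable for the sup norm, and $\mathcal{B}^2_{AP}(\mathbb{R}^N)$ contains the uncountable family $\{y\mapsto\cos(2\pi k\cdot y)\}_{k\in\mathbb{R}^N}$, which is pairwise separated in the Besicovitch norm, so it is a non-separable Hilbert space. Without a countable subset of $L^2(\Omega;A)$ that is dense for the $H$-seminorm, the Cantor diagonal argument does not yield a single subsequence valid for all test functions (Banach--Alaoglu only provides subnets when the predual is non-separable), and weak sequential compactness of the ball of $H$ cannot be substituted because $L_\varepsilon$ is not $\|\cdot\|_H$-continuous for fixed $\varepsilon$. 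This is a known feature of the framework rather than a defect peculiar to your write-up: the cited sources deal with it either by building separability into the definition (Nguetseng's $H$-algebras) or by passing to the separable closed subalgebra generated by the countably many functions one actually tests against (for $AP$, frequencies in a countable subgroup). You should either add separability of $A_y$ and $A_z$ as a hypothesis or carry out this reduction explicitly; the remainder of your argument is sound.
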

\begin{theorem}\label{t2.2}
Let $(u_\varepsilon)_{\varepsilon\in E}$ be a bounded sequence in $H^1(\Omega)$. Then there exist a subsequence $E'$ of $E$ and a triple \textbf{u}= $(u_0,u_1,u_2)\in H^1(\Omega) \times L^2(\Omega; B^{1,2}_{\# A_y}(\mathbb{R}^{N})) \times L^2(\Omega; \mathcal{B}^2_{A_y}(\mathbb{R}_y^{N}; B^{1,2}_{\# A_z}(\mathbb{R}^{N})))$  such that, as $E'\ni \varepsilon \longrightarrow 0$,
\begin{equation}\label{eq2.6}
 u_{\varepsilon} \longrightarrow u_0 \text{\ \  } in \text{\  \ } H^1(\Omega)-weak
\end{equation}
and
\begin{equation}\label{eq2.7}
 \frac{\partial u_\varepsilon}{\partial x_j} \xrightarrow{w-ms} \frac{\partial u_0}{\partial x_j}+\frac{\partial u_1}{\partial y_j}+\frac{\partial u_2}{\partial z_j} \text{\ \ }in \text{\ \ } L^2(\Omega)\ \  \;\; (1 \leq j \leq N).
\end{equation}
\end{theorem}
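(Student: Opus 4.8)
The plan is to combine the weak compactness of bounded sequences in the Hilbert space $H^1(\Omega)$ with the multiscale compactness furnished by Theorem~\ref{t2.1}, and then to identify the structure of the limiting gradient by integration-by-parts arguments carried out scale by scale. Since $(u_\varepsilon)_{\varepsilon\in E}$ is bounded in the reflexive space $H^1(\Omega)$, I would first extract a subsequence $E'\subset E$ and $u_0\in H^1(\Omega)$ with $u_\varepsilon\to u_0$ in $H^1(\Omega)$-weak, which is exactly \eqref{eq2.6}; because $\Omega$ is bounded with Lipschitz boundary, Rellich's theorem upgrades this to $u_\varepsilon\to u_0$ strongly in $L^2(\Omega)$. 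The sequences $u_\varepsilon$ and each $\partial u_\varepsilon/\partial x_j$ being bounded in $L^2(\Omega)$, a further application of Theorem~\ref{t2.1} (diagonalizing over the finitely many indices $j$) produces a single subsequence, still written $E'$, and functions $v_0,\chi_1,\dots,\chi_N\in L^2(\Omega;\mathcal{B}^2_A(\mathbb{R}^{2N}))$ with $u_\varepsilon\xrightarrow{w-ms}v_0$ and $\partial u_\varepsilon/\partial x_j\xrightarrow{w-ms}\chi_j$. Since strong $L^2(\Omega)$ convergence forces the weak multiscale limit to coincide with the ordinary limit (viewed as constant in $(y,z)$), I obtain $v_0=u_0$. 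It then remains to prove that $\chi_j=\partial u_0/\partial x_j+\partial u_1/\partial y_j+\partial u_2/\partial z_j$ for suitable correctors $u_1,u_2$ in the asserted spaces.

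The identification of the mean is immediate: testing the multiscale convergence of $\partial u_\varepsilon/\partial x_j$ against functions depending on $x$ alone and comparing with the weak $L^2$ limit yields $M(\chi_j)=\partial u_0/\partial x_j$. The heart of the argument is to show that the fluctuation $\chi-\nabla u_0$ decomposes as $\overline{\nabla}_y u_1+\overline{\nabla}_z u_2$. For the mesoscopic corrector I would test the convergence of $\nabla u_\varepsilon$ against fields $\Psi^\varepsilon(x)=\varphi(x)\,g(x/\varepsilon)$ with $\varphi\in\mathcal{C}^\infty_0(\Omega)$ and $g\in(A_y^\infty)^N$ solenoidal and mean-free in $y$. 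Upon integrating by parts, the would-be singular term $\varepsilon^{-1}(\Div_y g)(x/\varepsilon)$ vanishes by the solenoidality of $g$, so the right-hand side reduces to $-\int_\Omega u_\varepsilon\,\nabla\varphi\cdot g(x/\varepsilon)\,dx$, which tends to $0$ because $u_\varepsilon\to u_0$ strongly in $L^2(\Omega)$ and $M_y(g)=0$. Passing to the limit on the left shows that the $z$-average of $\chi-\nabla u_0$ is orthogonal to every $y$-solenoidal mean-free field, hence—by the orthogonal decomposition of $(B^2_{A_y})^N$ into gradients of correctors and solenoidal fields, together with the ergodicity of $A_y$—equals $\overline{\nabla}_y u_1$ for some $u_1\in L^2(\Omega;B^{1,2}_{\# A_y}(\mathbb{R}^N))$.

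The remaining microscopic corrector $u_2$ would be extracted analogously by testing against fields oscillating at the finest scale, $\Psi^\varepsilon(x)=\varphi(x)\,h(x/\varepsilon,x/\varepsilon^2)$ with $h$ solenoidal and mean-free in $z$; the portion of $\chi-\nabla u_0-\overline{\nabla}_y u_1$ carrying $z$-oscillations is then identified with $\overline{\nabla}_z u_2$, for some $u_2\in L^2(\Omega;\mathcal{B}^2_{A_y}(\mathbb{R}^N_y;B^{1,2}_{\# A_z}(\mathbb{R}^N)))$. Once this is in place, the completeness of the corrector spaces (Banach by \cite{Casado}) guarantees that $u_1$ and $u_2$ genuinely belong to the asserted spaces, which gives \eqref{eq2.7} and completes the proof.

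The step I expect to be the main obstacle is precisely this finest-scale extraction of $u_2$. Because the two fast variables sit at the well-separated orders $\varepsilon$ and $\varepsilon^2$, integrating the corresponding test fields by parts produces, besides the harmless term annihilated by $z$-solenoidality, an intermediate term of order $\varepsilon^{-1}$ supported at the mesoscopic scale that does \emph{not} vanish by solenoidality alone; one must show it is asymptotically negligible using the scale ratio $\varepsilon^2/\varepsilon\to 0$ and the density of $A^\infty$ in the relevant spaces. Making rigorous the passage to the multiscale limit for these $\varepsilon$-dependent, multiply-oscillating test fields, and verifying that the resulting curl-free mean-free fields are genuine gradients in the Besicovitch sense, is where the bulk of the technical work lies.
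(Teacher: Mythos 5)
First, note that the paper itself does not prove Theorem~\ref{t2.2}: it is recalled without proof as a known compactness result, with the periodic three-scale case due to Allaire--Briane \cite{AB96} and the algebras-with-mean-value versions taken from \cite{NSW, SSW, BJMA}. So there is no in-paper argument to compare against; what can be said is that your outline reproduces the standard strategy of those references (weak $H^1$ compactness plus Rellich, Theorem~\ref{t2.1} applied to the gradient components, identification of the mean by testing against $x$-dependent functions, and a Helmholtz-type orthogonality argument against solenoidal mean-free test fields at each fast scale). The extraction of $u_0$, the identification $M(\chi_j)=\partial u_0/\partial x_j$, and the mesoscopic step producing $u_1$ are essentially right, with one caveat you should make explicit: concluding that a mean-free field orthogonal to all \emph{smooth} mean-free $y$-solenoidal fields lies in $\overline{\nabla_y B^{1,2}_{\#A_y}}$ requires the density of $(A_y^\infty)^N$-solenoidal mean-free fields in the full space of solenoidal mean-free elements of $(\mathcal{B}^2_{A_y})^N$. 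This is automatic in the periodic case but is a genuine lemma in a general ergodic algebra wmv, proved in \cite{Casado, SSW, BJMA}; it cannot simply be cited as ``the orthogonal decomposition.''

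The genuine gap is the one you flag yourself and then leave unresolved: the finest-scale step. Testing $\nabla u_\varepsilon$ against $\varphi(x)h(x/\varepsilon,x/\varepsilon^2)$ with $\Div_z h=0$ and $M_z(h)=0$ and integrating by parts leaves the term $\varepsilon^{-1}\int_\Omega u_\varepsilon\varphi\,(\Div_y h)(x/\varepsilon,x/\varepsilon^2)\,dx$, and ``the scale ratio $\varepsilon^2/\varepsilon\to 0$'' is not an argument. The actual mechanism is: since $M_z(\Div_y h)=\Div_y M_z(h)=0$, one solves $\Div_z H=\Div_y h$ for $H$ in a suitable corrector space in $z$, and uses the chain-rule identity $\varepsilon^{-1}(\Div_z H)(x/\varepsilon,x/\varepsilon^2)=\varepsilon\,\Div_x\bigl[H(x/\varepsilon,x/\varepsilon^2)\bigr]-(\Div_y H)(x/\varepsilon,x/\varepsilon^2)$; the first piece is $O(\varepsilon)$ after a further integration by parts, and the second converges against $u_\varepsilon\varphi\to u_0\varphi$ (strongly in $L^2$) to $\int_\Omega u_0\varphi\,M(\Div_y H)\,dx=0$. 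Carrying this out in a general algebra wmv is nontrivial because the solution $H$ of the auxiliary divergence equation need not belong to $A_z^\infty$, so one must justify that $H(x/\varepsilon,x/\varepsilon^2)$ is an admissible oscillating test field (via approximation by elements of $A^\infty$ and the trace/mean-value properties of $B^2_A$). Without this step the decomposition $\chi-\nabla u_0-\overline{\nabla}_y u_1=\overline{\nabla}_z u_2$ is not established, and it is precisely the content of the reiterated compactness theorems in \cite{NSW, SSW}. As written, your proposal is a correct road map with the hardest kilometre missing.
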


\section{Estimates and passage to the limit}
The homogenization procedure starts with a boundedness result for the sequence $(u_\varepsilon)_{\varepsilon> 0}$. The  passage to the limit requires a structural hypothesis. In addition to (\ref{eq001}) and (\ref{eq11.2})-(\ref{eq11.4}), we assume that for almost all $x\in \Omega$
\begin{equation}\label{eq33.1}
A(x)\in \left( B^2_{A_y\odot A_z} \right)^{N\times N},
\end{equation}
where $A_y$ and $A_z$ are ergodic algebras wmv on $\mathbb{R}^N_y$ and $\mathbb{R}^N_z$, respectively, and $A_y\odot A_z$ is the product algebra wmv of $A_y$ and $A_z$.
\begin{proposition}
The sequence $(u_\varepsilon)_{\varepsilon> 0}$ is bounded in $H^1_0(\Omega)$.
\end{proposition}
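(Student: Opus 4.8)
The plan is to derive a uniform $H^1_0(\Omega)$-bound for the solutions $u_\varepsilon$ by exploiting the coercivity estimate \eqref{eq11.4} together with a clever choice of test function in the variational inequality \eqref{eq1.1}. The key difficulty is that each $u_\varepsilon$ lives in a different constraint set $K_\varepsilon$, so we cannot simply plug in a fixed test function; we must produce an admissible $v \in K_\varepsilon$ whose $H^1$-norm is controlled uniformly in $\varepsilon$.

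\medskip

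First I would fix a convenient admissible element of $K_\varepsilon$ to serve as the test function. Since $\psi_\varepsilon \in \{u \in H^1(\Omega) : \gamma_0(u) \le 0\}$ satisfies $\psi_\varepsilon \to \psi_0$ in $H^1(\Omega)$ by \eqref{eq001}, the positive part $\psi_\varepsilon^+ = \max(\psi_\varepsilon, 0)$ lies in $H^1_0(\Omega)$ (its trace is nonnegative and bounded above by $\gamma_0(\psi_\varepsilon) \le 0$, hence vanishes) and clearly satisfies $\psi_\varepsilon^+ \ge \psi_\varepsilon$ a.e. in $\Omega$. Therefore $w_\varepsilon := \psi_\varepsilon^+ \in K_\varepsilon$. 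Moreover, since the map $u \mapsto u^+$ is continuous on $H^1(\Omega)$ and $\psi_\varepsilon \to \psi_0$, the sequence $(w_\varepsilon)_{\varepsilon>0}$ is bounded in $H^1(\Omega)$, say $\|w_\varepsilon\|_{H^1(\Omega)} \le C_0$ uniformly in $\varepsilon$.

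\medskip

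Next I would substitute $v = w_\varepsilon$ into \eqref{eq1.1} to obtain
\begin{equation*}
\int_{\Omega} A\Big(x, \tfrac{x}{\varepsilon}, \tfrac{x}{\varepsilon^2}\Big)\nabla u_\varepsilon \cdot \nabla u_\varepsilon \, dx \le \int_{\Omega} A\Big(x, \tfrac{x}{\varepsilon}, \tfrac{x}{\varepsilon^2}\Big)\nabla u_\varepsilon \cdot \nabla w_\varepsilon \, dx - \int_{\Omega} f (w_\varepsilon - u_\varepsilon)\, dx.
\end{equation*}
The left-hand side is bounded below by $\alpha \|\nabla u_\varepsilon\|_{L^2(\Omega)}^2$ using the lower bound in \eqref{eq11.4}. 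On the right-hand side, the upper bound $\beta$ in \eqref{eq11.4} together with the Cauchy--Schwarz inequality controls the first term by $\beta \|\nabla u_\varepsilon\|_{L^2} \|\nabla w_\varepsilon\|_{L^2}$, while the $f$-terms are handled by Cauchy--Schwarz and the Poincaré inequality on $H^1_0(\Omega)$. Collecting these, and writing $\|\nabla u_\varepsilon\|_{L^2} =: t$, yields an inequality of the form $\alpha t^2 \le c_1 t + c_2$ with $c_1, c_2$ independent of $\varepsilon$ (depending only on $\alpha$, $\beta$, $C_0$, $\|f\|_{L^2}$, and the Poincaré constant).

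\medskip

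Finally, the quadratic inequality $\alpha t^2 - c_1 t - c_2 \le 0$ forces $t = \|\nabla u_\varepsilon\|_{L^2(\Omega)}$ to be bounded by the larger root, a constant independent of $\varepsilon$. An application of the Poincaré inequality then bounds $\|u_\varepsilon\|_{L^2(\Omega)}$ as well, giving the desired uniform bound in $H^1_0(\Omega)$. The main obstacle in this argument is the construction of the uniformly bounded admissible test function $w_\varepsilon \in K_\varepsilon$; once $\psi_\varepsilon^+$ is identified as the right choice and its uniform $H^1$-bound is secured from \eqref{eq001}, the remainder is the standard energy estimate via coercivity and Young's inequality.
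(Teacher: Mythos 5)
Your proposal is correct and follows essentially the same route as the paper: both test the inequality with $v=\psi_\varepsilon^+\in K_\varepsilon$, use \eqref{eq001} to bound $\psi_\varepsilon^+$ uniformly in $H^1$, and then run the standard coercivity/Cauchy--Schwarz/Young absorption to bound $\|\nabla u_\varepsilon\|_{L^2(\Omega)^N}$. The only cosmetic difference is that you close the estimate by solving the quadratic inequality $\alpha t^2\le c_1t+c_2$ while the paper absorbs the linear terms via Young's inequality with explicit parameters; these are equivalent.
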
  
\begin{proof}  
Let $\psi^+_\varepsilon=max(\psi_\varepsilon;0)=\frac{1}{2}(\psi_\varepsilon+|\psi_\varepsilon|)$, then $\psi^+_\varepsilon\in K_\varepsilon$. Moreover, since $\psi_\varepsilon$ converges strongly in $H^1(\Omega)$, there exists a constant $C>0$ such that $||\psi^+_\varepsilon||_{H^1_0(\Omega)} \leq C$. Taking $v=\psi^+_\varepsilon$ in (\ref{eq1.1}), we get
\begin{equation*}
   \displaystyle{\int_{\Omega}A(x, \frac{x}{\varepsilon},\frac{x}{\varepsilon^2})\nabla u_\varepsilon(x)\cdot\nabla (\psi^+_\varepsilon(x)- u_\varepsilon(x))dx} \geq \displaystyle{\int_{\Omega} f(x)(\psi^+_\varepsilon(x)- u_\varepsilon(x))dx},
\end{equation*}               
from which we deduce
\begin{equation*}
   \displaystyle{\int_{\Omega}A(x, \frac{x}{\varepsilon},\frac{x}{\varepsilon^2})\nabla u_\varepsilon\cdot\nabla u_\varepsilon dx} \leq \displaystyle{ \int_{\Omega}A(x, \frac{x}{\varepsilon},\frac{x}{\varepsilon^2})\nabla u_\varepsilon\cdot\nabla \psi^+_\varepsilon dx+\int_{\Omega} f(u_\varepsilon-\psi^+_\varepsilon)dx}. 
\end{equation*}          
Thus
 \begin{eqnarray*}
    \displaystyle{\alpha||\nabla u_\varepsilon||_{L^2(\Omega)^N}^2} &\leq& \displaystyle{ \int_{\Omega}A(x, \frac{x}{\varepsilon},\frac{x}{\varepsilon^2})\nabla u_\varepsilon\cdot\nabla \psi^+_\varepsilon dx+\int_{\Omega} f(u_\varepsilon-\psi^+_\varepsilon)dx} \\
    &\leq& ||f||_{L^2(\Omega)}\left(||\psi^+_\varepsilon||_{L^2(\Omega)}+||u_\varepsilon||_{L^2(\Omega)}\right)+C||\nabla u_\varepsilon||_{L^2(\Omega)^N}||\nabla \psi^+_\varepsilon||_{L^2(\Omega)^N}\\
    &\leq& ||f||_{L^2(\Omega)}\left(C_1+||u_\varepsilon||_{L^2(\Omega)}\right)+C||\nabla u_\varepsilon||_{L^2(\Omega)^N}\\
    &\leq& ||f||_{L^2(\Omega)}\left(C_1+C_2||\nabla u_\varepsilon||_{L^2(\Omega)}\right)+C||\nabla u_\varepsilon||_{L^2(\Omega)^N} \ \ \ \\
    &\leq& C_1||f||_{L^2(\Omega)}+C_2||f||_{L^2(\Omega)}||\nabla u_\varepsilon||_{L^2(\Omega)}+C||\nabla u_\varepsilon||_{L^2(\Omega)^N} \\
    &\leq& C_1||f||_{L^2(\Omega)}+C_2\left[\frac{\delta^2}{2}||f||_{L^2(\Omega)}^2+\frac{\delta^{-2}}{2}||\nabla u_\varepsilon||_{L^2(\Omega)^N}^2\right]\\
    &&+C\left[\frac{\theta^{2}}{2}+\frac{\theta^{-2}}{2}||\nabla u_\varepsilon||_{L^2(\Omega)^N}^2 \right] \ \ \ \ \ (\mbox{Young's inequality with} \ (\delta, \theta)\in(\mathbb{R}^*_+)^2)\\
        &\leq& C_2\frac{\delta^{-2}}{2}||\nabla u_\varepsilon||_{L^2(\Omega)^N}^2+C \frac{\theta^{-2}}{2}||\nabla u_\varepsilon||_{L^2(\Omega)^N}^2 +C_2\frac{\delta^2}{2}||f||_{L^2(\Omega)}^2\\
        &&+ C_1 ||f||_{L^2(\Omega)}+ C \frac{\theta^{2}}{2}.
 \end{eqnarray*}  
Taking $\delta=\sqrt{\frac{2C_2}{\alpha}}$  and $\theta=\sqrt{\frac{2C}{\alpha}}$, we obtain
 \begin{equation*}
 \displaystyle{\alpha||\nabla u_\varepsilon||_{L^2(\Omega)^N}^2} \leq \frac{\alpha}{4}||\nabla u_\varepsilon||_{L^2(\Omega)^N}^2+\frac{\alpha}{4}||\nabla u_\varepsilon||_{L^2(\Omega)^N}^2+ K, 
 \end{equation*}
where   $K=C_1||f||_{L^2(\Omega)}+\frac{C_2^2}{\alpha}||f||_{L^2(\Omega)}^2+\frac{C^2}{\alpha}$.  
Hence, we are led to
 $$||\nabla u_\varepsilon||_{L^2(\Omega)^N} \leq \sqrt{\frac{2K}{\alpha}},$$
 and the proof is completed.
\end{proof}
     
The dependence on $\varepsilon$ of the closed convex set $K_\varepsilon$ appearing in the problem (\ref{eq1.1}) is a  problem with respect to the limit passage. The problem (\ref{eq1.1}) need to be reformulated. Let us introduce the following space $$K=\{v\in H^1(\Omega); v \geq 0  \ \ a.e \ \mbox{in}\ \Omega \}.$$
It is straightforward that $u_\varepsilon \in K_\varepsilon$ is a solution to (\ref{eq1.1}) if and only if $\hat{u}_\varepsilon =u_\varepsilon - \psi_\varepsilon \in K$ solves the  following problem:
\begin{eqnarray} \label{eqt2}
            \left \{ \begin{array}{lll}
            \mbox{Find} \ \hat{u}_\varepsilon \in \ K\ \ \mbox{such that} \\ 
          \displaystyle{\int_{\Omega}A^\varepsilon(x)\nabla (\hat{u}_\varepsilon(x)+\psi_\varepsilon(x))\cdot\nabla (\hat{v}_\varepsilon(x)-\hat{ u}_\varepsilon(x))dx} &\geq& \displaystyle{\int_{\Omega} f(x)(\hat{v}_\varepsilon(x)- \hat{u}_\varepsilon(x))dx}\\
          \mbox{for all} \ \hat{v}_\varepsilon \in K.
              \end{array}\right.
            \end{eqnarray} 
We need the following spaces in the sequel: 
$$
\mathbb{V} := H^1(\Omega) \times L^2(\Omega; B^{1,2}_{\# A_y}(\mathbb{R}^{N})) \times L^2(\Omega; \mathcal{B}^p_{A_y}(\mathbb{R}_y^{N}; B^{1,2}_{\# A_z}(\mathbb{R}^{N}))),
$$
$$
 K_2=\{(v_0,v_1,v_2) \in \mathbb{V}; \ v_0 \geq 0  \ \  a.e \ \ \mbox{in} \ \ \Omega \}
$$
 and 
$$
  K_3=\{(v_0,v_1,v_2) \in \mathbb{V};v_0 \geq \psi_0  \ \  a.e \ \ \mbox{in} \ \ \Omega \}.
  $$

\begin{theorem}
There exist a triple $(u_0,u_1,u_2)\in\mathbb{V}$ 
and a subsequence $E'$ of $E$ such that as $E' \ni \varepsilon \longrightarrow 0$,\begin{eqnarray}
u_\varepsilon  & {\longrightarrow} &  u_0 \qquad \qquad  \qquad \qquad \qquad  \text{in} \ \ H^1(\Omega)\mbox{-weak},\label{eqt4}\\
\label{eqt5}
 \nabla u_\varepsilon & \xrightarrow{w-ms} &  \nabla u_0 + \nabla_y u_1 + \nabla_z u_2 \ \ \ \ \mbox{in} \ \ L^2(\Omega)^N. 
 \end{eqnarray}
  Moreover, the triple $(u_0,u_1,u_2)\in\mathbb{V}$ is the unique solution to the variational inequality
  
 \begin{eqnarray} \label{eqt12}
                        \left \{ \begin{array}{lll}
                        \mbox{Find} \ (u_0,u_1,u_2) \in \ K_3 \ \ \mbox{such that} \\ 
                      \displaystyle{\int_{\Omega}\!} M[A(\nabla u_0 +\nabla_yu_1+\!\nabla_zu_2 )\cdot(\nabla(v_0-u_0) \!+\!\nabla_y(v_1-u_1)\!+\!\nabla_z(v_2-u_2))]dx \\ \ \ \ \ \ \ \ \ \ \ \ \ \ \ \ \ \ \  \geq \displaystyle{\int_{\Omega} f(v_0-u_0) dx } \\
                      \mbox{for all} \ (v_0,v_1,v_2) \in K_3.
                          \end{array}\right.
                        \end{eqnarray}      
\end{theorem}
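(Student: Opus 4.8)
The plan is to combine the a priori bound with the compactness Theorem~\ref{t2.2}, to pass to the limit \emph{directly} in the variational inequality (\ref{eq1.1}), and finally to invoke uniqueness for (\ref{eqt12}) in order to upgrade the subsequential convergence to convergence of the whole family. First I would observe that the foregoing Proposition bounds $(u_\varepsilon)$ in $H^1_0(\Omega)\subset H^1(\Omega)$, so Theorem~\ref{t2.2} furnishes a subsequence $E'$ and a triple $(u_0,u_1,u_2)\in\mathbb V$ for which (\ref{eqt4})--(\ref{eqt5}) hold; writing $\xi_u:=\nabla u_0+\nabla_y u_1+\nabla_z u_2$, these are exactly the asserted convergences. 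To see $u_0\in K_3$, I would use that (\ref{eqt4}) together with the compact embedding $H^1_0(\Omega)\hookrightarrow L^2(\Omega)$ gives $u_\varepsilon\to u_0$ strongly in $L^2(\Omega)$, while (\ref{eq001}) gives $\psi_\varepsilon\to\psi_0$ in $L^2(\Omega)$; passing to the limit in the pointwise inequality $u_\varepsilon\ge\psi_\varepsilon$ then yields $u_0\ge\psi_0$ a.e., and the weak $H^1$ limit preserves the null trace, so $(u_0,u_1,u_2)\in K_3$.

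The heart of the argument is the construction of \emph{admissible oscillating test functions}. Given a test triple $(v_0,v_1,v_2)\in K_3$, which by a density argument may be taken with $v_0$ smooth and $v_1,v_2$ in the smooth cores of the corrector spaces, I set $\Phi_\varepsilon(x)=v_0(x)+\varepsilon v_1(x,\tfrac{x}{\varepsilon})+\varepsilon^2 v_2(x,\tfrac{x}{\varepsilon},\tfrac{x}{\varepsilon^2})$. A direct computation shows $\Phi_\varepsilon\to v_0$ strongly in $L^2(\Omega)$ and $\nabla\Phi_\varepsilon\xrightarrow{s\text{-}ms}\xi_v:=\nabla v_0+\nabla_y v_1+\nabla_z v_2$. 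The obstacle is that $\Phi_\varepsilon$ need not satisfy the unilateral constraint $\Phi_\varepsilon\ge\psi_\varepsilon$; since the correctors are $O(\varepsilon)$ in $L^\infty$ this can only fail in a thin neighbourhood of the contact set $\{v_0=\psi_0\}$, and I would repair it by truncation (e.g. replacing $\Phi_\varepsilon$ by $\max(\Phi_\varepsilon,\psi_\varepsilon)$, or by damping the correctors with a cut-off vanishing near the contact set) so as to produce $v_\varepsilon\in K_\varepsilon$ still satisfying $v_\varepsilon\to v_0$ in $L^2(\Omega)$ and $\nabla v_\varepsilon\xrightarrow{s\text{-}ms}\xi_v$.

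With such recovery sequences at hand, the limit passage is clean. Rewriting (\ref{eq1.1}) as
\[
\int_{\Omega}A^\varepsilon\nabla u_\varepsilon\cdot\nabla u_\varepsilon\,dx\le \int_{\Omega}A^\varepsilon\nabla u_\varepsilon\cdot\nabla v_\varepsilon\,dx-\int_{\Omega}f(v_\varepsilon-u_\varepsilon)\,dx,
\]
I take $\liminf$ on the left and $\limsup$ on the right. For the left-hand side, lower semicontinuity follows from symmetry (\ref{eq11.3}) and ellipticity (\ref{eq11.4}): expanding $0\le\int_{\Omega}A^\varepsilon(\nabla u_\varepsilon-\Psi_\varepsilon)\cdot(\nabla u_\varepsilon-\Psi_\varepsilon)\,dx$ for a strong recovery field $\Psi_\varepsilon\xrightarrow{s\text{-}ms}\xi_u$ (the recovery sequence gradient for the admissible triple $(u_0,u_1,u_2)$ itself) and letting $\varepsilon\to0$ gives $\liminf\int_{\Omega}A^\varepsilon\nabla u_\varepsilon\cdot\nabla u_\varepsilon\,dx\ge\int_{\Omega}M[A\,\xi_u\cdot\xi_u]\,dx$. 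On the right, the pairing of the weak-multiscale limit $\nabla u_\varepsilon$ with the strong-multiscale limit $\nabla v_\varepsilon$ against the continuous coefficient $A$ converges to $\int_{\Omega}M[A\,\xi_u\cdot\xi_v]\,dx$, while the strong $L^2$ convergences handle the linear term. This yields
\[
\int_{\Omega}M[A\,\xi_u\cdot\xi_u]\,dx\le\int_{\Omega}M[A\,\xi_u\cdot\xi_v]\,dx-\int_{\Omega}f(v_0-u_0)\,dx,
\]
which is precisely (\ref{eqt12}).

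Finally, I would settle uniqueness for (\ref{eqt12}) by the Lions--Stampacchia theorem. The bilinear form $(\mathbf u,\mathbf w)\mapsto\int_{\Omega}M[A(\nabla u_0+\nabla_yu_1+\nabla_zu_2)\cdot(\nabla w_0+\nabla_yw_1+\nabla_zw_2)]\,dx$ is continuous and, by (\ref{eq11.4}) together with the orthogonality of the three scales (so that $M[|\xi|^2]=|\nabla v_0|^2+M[|\nabla_yv_1|^2]+M[|\nabla_zv_2|^2]$) and Poincaré's inequality on $H^1_0(\Omega)$, coercive on $\mathbb V$; since $K_3$ is a nonempty closed convex subset, the solution is unique. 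Uniqueness forces the limit to be independent of the extracted subsequence, whence the full family $(u_\varepsilon)$ enjoys (\ref{eqt4})--(\ref{eqt5}). The \textbf{main obstacle} I anticipate is exactly the admissible recovery step: ensuring $v_\varepsilon\in K_\varepsilon$ while preserving the strong multiscale convergence of the gradients, in particular controlling the contribution of the truncation near the contact set so that no energy is lost in the limit.
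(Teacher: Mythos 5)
Your proposal is correct in outline and arrives at the same limit inequality, but it takes a structurally different route at the one point where the paper makes its key move. The paper does \emph{not} pass to the limit in (\ref{eq1.1}) directly: it first substitutes $\hat u_\varepsilon = u_\varepsilon - \psi_\varepsilon$, which converts the $\varepsilon$-dependent convex set $K_\varepsilon$ into the \emph{fixed} cone $K=\{v\in H^1(\Omega):\ v\ge 0\}$, applies Theorem~\ref{t2.2} to the bounded sequence $(\hat u_\varepsilon)$, passes to the limit in the shifted inequality (\ref{eqt2}) with oscillating test functions $\hat v_\varepsilon=\hat v_0+\varepsilon\hat v_1+\varepsilon^2\hat v_2$, $\hat v_0\ge 0$, and only then undoes the shift via $u_0=\hat u_0+\psi_0$ to obtain (\ref{eqt12}); this is precisely where the strong $H^1$ convergence (\ref{eq001}) of the obstacles is consumed. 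You instead keep the moving constraint $v\ge\psi_\varepsilon$ and propose to repair the recovery sequence by truncation against $\psi_\varepsilon$. Both devices face the same admissibility issue --- the corrector terms are only $O(\varepsilon)$ in $L^\infty$ while $\hat v_0$ (respectively $v_0-\psi_0$) vanishes on the contact set --- and neither your sketch nor the paper fully resolves it: the paper simply asserts that $\hat v_\varepsilon\ge 0$ ``for $\varepsilon$ small enough,'' which fails pointwise wherever $\hat v_0=0$, and you explicitly leave open the energy estimate for the truncation. A clean fix in either formulation is to damp the correctors by a cut-off vanishing near $\{\hat v_0=0\}$ (or to replace $\hat v_\varepsilon$ by its positive part and exploit $\nabla\hat v_0=0$ a.e.\ on $\{\hat v_0=0\}$), then remove the cut-off by density; the shifted formulation makes this easier because the constraint is homogeneous. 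On the other hand, your $\liminf/\limsup$ treatment of the quadratic term via the expansion of $\int_\Omega A^\varepsilon(\nabla u_\varepsilon-\Psi_\varepsilon)\cdot(\nabla u_\varepsilon-\Psi_\varepsilon)\,dx\ge 0$ supplies detail that the paper omits entirely (it only says ``passing to the limit yields''), and your Lions--Stampacchia coercivity argument for uniqueness matches the paper's appeal to Stampacchia's lemma.
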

\begin{proof}
The sequence $(\hat{u}_\varepsilon)_{\varepsilon\in E}$ is bounded in $H^1(\Omega)$ because  $(u_\varepsilon)_{\varepsilon\in E}$ is bounded in $H^1(\Omega)$ and  $(\psi_\varepsilon)_{\varepsilon\in E}$ converges strongly in $H^1(\Omega)$. Therefore, according to Theorem~\ref{t2.2}, there exist a triple $ (\hat{u}_0,\hat{u}_1,\hat{u}_2) \in \mathbb{V}$ and a subsequence $E'$ of $E$ such that
\begin{eqnarray*}
\hat{u}_\varepsilon  & {\longrightarrow} & \hat{u}_0 \quad\  \qquad  \qquad \qquad \qquad  \text{in} \ \ H^1(\Omega)\mbox{-weak},\\
 \nabla \hat{u}_\varepsilon & \xrightarrow{w-ms} & \nabla \hat{u}_0 + \nabla_y\hat{u}_1+\nabla_z\hat{u}_2 \ \ \ \ \mbox{in} \ \ L^2(\Omega)^N. 
 \end{eqnarray*}
Indeed, $\hat{u}_0\geq 0$ almost everywhere in $\Omega$ so that $(\hat{u}_0,\hat{u}_1,\hat{u}_2)\in K_2$. Now let $\hat{v}_0 \in \mathcal{C}^\infty_0(\Omega), \ \hat{v}_1 \in \mathcal{C}^\infty_0(\Omega)\otimes A^\infty_y \mbox{and} \ \hat{v}_2 \in \mathcal{C}^\infty_0(\Omega)\otimes A^\infty,$ with $\hat{v}_0 \geq 0$, and define
\begin{equation*}
 \hat{v}_\varepsilon(x)=\hat{v}_0(x)+\varepsilon\hat{v}_1(x,\frac{x}{\varepsilon})+\varepsilon^2\hat{v}_2(x,\frac{x}{\varepsilon},\frac{x}{\varepsilon^2}), \ \ \ \ \ \ (x\in \Omega, \varepsilon >0).
 \end{equation*}
We assume that $\hat{v}_\varepsilon(x) \geq 0$ for all $x\in \Omega$, if not we may consider $\varepsilon$ small enough to have it. We have 
\begin{eqnarray*}
 \nabla\hat{v}_\varepsilon(x)&=&\nabla \hat{v}_0(x)+\varepsilon\nabla_x\hat{v}_1(x,\frac{x}{\varepsilon})+ \nabla_y\hat{v}_1(x,\frac{x}{\varepsilon})+\varepsilon^2\nabla_x\hat{v}_2(x,\frac{x}{\varepsilon},\frac{x}{\varepsilon^2})\\
 &&+\varepsilon\nabla_y\hat{v}_2(x,\frac{x}{\varepsilon},\frac{x}{\varepsilon^2})+\nabla_z\hat{v}_2(x,\frac{x}{\varepsilon},\frac{x}{\varepsilon^2})
 \end{eqnarray*}  
 and we recall that as $\varepsilon\to 0$ the following convergence takes place 
 \begin{equation*}
\nabla \hat{v}_\varepsilon  \xrightarrow{s-ms} \nabla \hat{v}_0+\nabla_y\hat{v}_1+\nabla_z\hat{v}_2 \ \ \ \ \ \ \ \mbox{in} \ \ L^2(\Omega)^N. 
 \end{equation*}
Passing to the limit as $E'\ni\varepsilon \longrightarrow 0$ in problem (\ref{eqt2}) yields \begin{eqnarray*} \label{eqt10}
            \left \{ \begin{array}{lll}
           (\hat{u}_0,\hat{u}_1,\hat{u}_2) \in \ K_2 \ \ \mbox{:} \\ 
          \displaystyle{\int_{\Omega}}M[A(\nabla \hat{u}_0 +\nabla_y\hat{u}_1+\nabla_z\hat{u}_2+ \nabla \psi_0)\cdot(\nabla(\hat{v}_0-\hat{u}_0)   +\nabla_y(\hat{v}_1-\hat{u}_1)+\nabla_z(\hat{v}_2-\hat{u}_2))]\ dx \\ \ \ \ \ \ \ \ \ \ \ \ \ \ \ \ \ \ \ \ \ \ \  \geq \displaystyle{\int_{\Omega} f(\hat{v}_0-\hat{u}_0) dx } \\
       \mbox{for all} \ (\hat{v}_0,\hat{v}_1,\hat{v}_2) \in K_2.
              \end{array}\right.
            \end{eqnarray*}  
But, the change of variables $u_0=\hat{u}_0 + \psi_0$, $u_1 = \hat{u}_1$ and $u_2=\hat{u}_2$ show that the above problem appears to be equivalent to the problem (\ref{eqt12}).
Hence, the triple $(u_0,u_1,u_2)\in\mathbb{V}$ is a solution to the variational inequality (\ref{eqt12}). By means of the Stampacchia's lemma it is actually its unique solution.
\end{proof}
\section{Main result: Macroscopic problem}
In order to derive the macroscopic problem, we need to formulate the microscopic and mesoscopic ones.  

\subsection{Microscopic problem}
Taking  $v_0=u_0$ and $v_1=u_1$in (\ref{eqt12}), we get 
\begin{eqnarray*}\label{eqt14}
\left \{ \begin{array}{lll}
 \displaystyle {\int_{\Omega}M\left[ A(x,y,z)(\nabla u_0 +\nabla_yu_1+\nabla_zu_2 )\cdot\nabla_zv_2\right]\ dx \geq  0}\\
  \mbox{for all } \ \  v_2 \in L^2(\Omega; \mathcal{B}^2_{A_y}(\mathbb{R}_y^{N}; B^{1,2}_{\# A_z}(\mathbb{R}^{N}))).
                \end{array}\right.
\end{eqnarray*}
Since $ L^2(\Omega; \mathcal{B}^2_{A_y}(\mathbb{R}_y^{N}; B^{1,2}_{\# A_z}(\mathbb{R}^{N})))$ is a vector space, we obtain the following variational equation 
\begin{eqnarray}\label{eqt14}
\left \{ \begin{array}{lll}
 \displaystyle {\int_{\Omega}M\left[ A(x,y,z)(\nabla u_0 +\nabla_yu_1+\nabla_zu_2 )\cdot\nabla_zv_2\right]\ dx =  0}\\
  \mbox{for all } \ \  v_2 \in L^2(\Omega; \mathcal{B}^2_{A_y}(\mathbb{R}_y^{N}; B^{1,2}_{\# A_z}(\mathbb{R}^{N}))).
   \end{array}\right.
\end{eqnarray}
Now we take $v_2= \varphi \otimes
\omega$ with $\varphi \in \mathcal{C}^\infty_0(\Omega) \otimes A^\infty_y$ and $\omega \in A^\infty_z$, and realize that for almost all $(x,y)\in\Omega\times \mathbb{R}^N$, the function $u_2(x,y)\in B^{1,2}_{\# A_z}(\mathbb{R}^N)$ solves the following problem
\begin{eqnarray}\label{eqt16}
               \left \{ \begin{array}{lll}
                M_z\left[ A(x,y,z)\nabla_zu_2 \cdot\nabla_z\omega\right] = -M_z\left[ A(x,y,z)(\nabla u_0 +\nabla_yu_1 )\cdot\nabla_z\omega\right]\\
                 \mbox{for all } \ \  \omega \in A^\infty_z.
                               \end{array}\right.
\end{eqnarray}
This being so, for almost all $(x,y)\in\Omega\times \mathbb{R}^N$ we introduce the following microscopic problem:
\begin{eqnarray} \label{eqt17}
            \left \{ \begin{array}{lll}
           \chi^j \equiv \chi^j(x,y)  \in  B^{1,2}_{\# A_z}(\mathbb{R}^{N}) \ \ \mbox{such that} \\ 
          M_z\left[ A(x,y,z)\nabla_z\chi^j \cdot\nabla_z\omega\right] = -M_z\left[\sum_{k=1}^{N} a_{jk}\frac{\partial \omega}{\partial z_k}\right] \\
          \mbox{for all} \ \omega \in B^{1,2}_{\# A_z}(\mathbb{R}^{N})
              \end{array}\right.
            \end{eqnarray}  
which possesses a unique solution. Setting $\chi =(\chi_j)_{1 \leq j \leq N}$, it is easy to check that the function $(x,y,z)\mapsto \chi(z)(\nabla u_0(x)+\nabla_y u_1(x,y))$ is a solution to (\ref{eqt16}).  Therefore by uniqueness of the solution to (\ref{eqt16}), it holds allmost everywhere in $\Omega \times \mathbb{R}^{N}\times\mathbb{R}^N$ that $u_2 =\chi\cdot(\nabla u_0+\nabla_yu_1)$. Letting $\displaystyle{(\nabla_z\chi)_{ij}=\frac{\partial\chi^j}{\partial z_i} \ \ (1 \leq i,j\leq N)}$, we may then write
   \begin{equation}\label{eqt00}
     \nabla_zu_2(x,y,z)=\left[\nabla_z\chi(z)\right]\left(\nabla u_0(x)+\nabla_yu_1(x,y)\right)         
     \end{equation}
for almost every $(x,y,z) \in \Omega\times \mathbb{R}^{N}\times\mathbb{R}^N$.
We are now in a position to upscale to the mesoscopic scale.     
\subsection{Mesoscopic problem}
Taking $v_0=u_0$ and $v_2=u_2$ in (\ref{eqt12}), we are led to 
\begin{eqnarray}\label{eqt19}
\left \{ \begin{array}{lll}
 \displaystyle {\int_{\Omega}M\left[ A(x,y,z)(\nabla u_0 +\nabla_yu_1+\nabla_zu_2 )\cdot\nabla_zv_1\right]\ dx = 0}\\
  \mbox{for all } \ \  v_1 \in L^2(\Omega;  B^{1,2}_{\# A_y}(\mathbb{R}^{N})).
                \end{array}\right.
\end{eqnarray}
Choosing $v_1= \varphi \otimes
 \omega$ with $\varphi \in \mathcal{C}^\infty_0(\Omega)$ and $\omega \in A^\infty_y$ in (\ref{eqt19}), it appears that for almost all $x\in \Omega$, $u_1(x)$ is a solution to the following problem 
 \begin{eqnarray}\label{eqt20}
 \left \{ \begin{array}{lll}
 u_1(x) \in B^{1,2}_{\# A_y}(\mathbb{R}^{N}),\\
  \displaystyle {M[ A(x,y,z)(I+\nabla_z \chi)(\nabla u_0 +\nabla_yu_1)\cdot\nabla_y\omega] = 0}\\
   \mbox{for all } \ \  \omega \in A^\infty_y,
                 \end{array}\right.
              \end{eqnarray}
         which also writes 
          \begin{eqnarray}\label{eqt21}
               \left \{ \begin{array}{lll}
                u_1(x) \in B^{1,2}_{\# A_y}(\mathbb{R}^{N}),\\
                M_y[ \tilde{A}(x,y)\nabla_yu_1 \cdot\nabla_y\omega] = -M_y[ \tilde{A}(x,y)\nabla u_0\cdot\nabla_y\omega]\\
                 \mbox{for all } \ \  \omega \in A^\infty_y.
                               \end{array}\right.
                            \end{eqnarray}
Where $\tilde{A}(x,y)=M_z[A(x,y,z)(I+\nabla_z \chi)]$ is the well-known symmetric positive-definite  averaged matrix. For almost all $x\in \mathbb{R}^N$, we introduce the  mesoscopic problem:
\begin{eqnarray} \label{eqt22}
            \left \{ \begin{array}{lll}
            \mbox{Find} \ \theta^j  \in  B^{1,2}_{\# A_y}(\mathbb{R}^{N}) \ \ \mbox{such that} \\ 
          M_y[ \tilde{A}\nabla_y\theta^j \cdot\nabla_y\omega] = -M_y[\sum_{k=1}^{N} \tilde{a}_{kj}\frac{\partial \omega}{\partial y_k}], \\
          \mbox{for all} \ \omega \in B^{1,2}_{\# A_y}(\mathbb{R}^{N}).
              \end{array}\right.
            \end{eqnarray}  
and recall that it possesses a unique solution. It is not difficult to check that the function $(x,y)\mapsto \theta (y) \nabla u_0(x)$ is also a solution to (\ref{eqt21}), so that almost everywhere in $\Omega \times \mathbb{R}^N$, it holds by uniqueness of the solution to (\ref{eqt21}) that $ u_1(x,y)=\theta(y)\nabla u_0(x)$. With the matrix notation, $   \displaystyle{(\nabla_y\theta)_{ij}=\frac{\partial\theta^j}{\partial y_i}, \ \ (1 \leq i,j\leq N)}$, it follows that
\begin{equation}\label{eqt23}
    \displaystyle{ \nabla_yu_1(x,y)=\nabla_y\theta(y)\nabla u_0(x)} \mbox{\ \ \ \ \  a.e  in \ }   (x,y)\in \Omega \times \mathbb{R}^N       
     \end{equation}
\subsection{Homogenization result: Macroscopic problem}
Let $K_0=\{ H^1_0(\Omega), v_0\geq \psi_0  \ a.e \ \mbox{in} \ \Omega\}$. Choosing $v_2=u_2$ and $v_1=u_1$ in (\ref{eqt12}) leads to 
\begin{eqnarray}\label{eqt26}
  \left \{ \begin{array}{lll}   
\displaystyle{  \int_{\Omega}M[ A(\nabla u_0 +\nabla_yu_1+\nabla_zu_2)\cdot\nabla (v_0-u_0)]\ dx  \geq \int_{\Omega}f(v_0-u_0) \ dx},\\
\mbox{for all}\ v_0\in K_0,
    \end{array}\right. 
  \end{eqnarray}
which rewrites using (\ref{eqt00}) and (\ref{eqt23}) as
 \begin{eqnarray}\label{eqt27}
   \left \{ \begin{array}{lll}   
 \displaystyle{  \int_{\Omega}M[ A(I+ \nabla_z \chi)(I+\nabla_y \theta)\nabla u_0\cdot\nabla (v_0-u_0)]\ dx  \geq \int_{\Omega}f(v_0-u_0) \ dx},\\
 \mbox{for all}\ v_0\in K_0.
     \end{array}\right. 
   \end{eqnarray} 
It is classical that the matrix $ M[A(I+\nabla_z \chi)(I+ \nabla_y \theta)]$ is symmetric and positive-definite\,\cite{BLP}. Thus, the problem (\ref{eqt27}) admits a unique solution so that the whole fundamental sequence $(u_\varepsilon)_{\varepsilon\in E}$ weakly converges to $u_0$ in $H^1_0(\Omega)$. The arbitrariness of the fundamental sequence $E$ implies the convergence of the generalized sequence $(u_\varepsilon)_{\varepsilon>0}$. This end the proof of our main result.

\begin{theorem}
For any $\varepsilon >0$, let $u_\varepsilon$ be the unique solution to the variational inequality (\ref{eq1.1}) with the hypotheses (\ref{eq001}), (\ref{eq11.2})-(\ref{eq11.4}) and (\ref{eq33.1}). Then as $0< \varepsilon \longrightarrow 0$, we have $u_\varepsilon \longrightarrow u_0$ in $H^1_0(\Omega)$-weak, and strongly in $L^2(\Omega)$, where $u_0$ is the unique solution to the following homogenized variational inequality
\begin{eqnarray}
     \left \{ \begin{array}{lll} 
     u_0\in K_0,\\ 
  \displaystyle{   \int_{\Omega}A^{*}(x)\nabla u_0\cdot\nabla (v_0-u_0)\ dx  \geq \int_{\Omega}f(v_0-u_0) \ dx },\\
  \mbox{for all}\ v_0\in K_0,
   \end{array}\right.
   \end{eqnarray} 
where, for almost every $x\in \Omega$, the symmetric positive-definite homogenized matrix is given by
$$
A^{*}(x)=M[A(I+\nabla_z \chi)(I+ \nabla_y \theta)],
$$
the functions $\chi$ and $\theta$ being the solutions to the microscopic problem (\ref{eqt17}) and the mesoscopic problem (\ref{eqt22}), respectively.
\end{theorem}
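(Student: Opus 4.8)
The plan is to assemble the results of Sections~3 and 4 into a single passage to the limit, the only genuine difficulty being the $\varepsilon$-dependence of the constraint set $K_\varepsilon$. First I would remove that dependence by the translation $\hat u_\varepsilon = u_\varepsilon - \psi_\varepsilon$, which by (\ref{eq001}) and a direct computation carries $u_\varepsilon\in K_\varepsilon$ solving (\ref{eq1.1}) onto $\hat u_\varepsilon$ solving (\ref{eqt2}) posed on the fixed cone $K=\{v\in H^1(\Omega):v\ge 0\}$. The Proposition gives a uniform bound of $(u_\varepsilon)$ in $H^1_0(\Omega)$, hence of $(\hat u_\varepsilon)$ in $H^1(\Omega)$, so the multiscale compactness Theorem~\ref{t2.2} applies along some $E'\subset E$ and yields a limit triple together with the convergences (\ref{eqt4})--(\ref{eqt5}).

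Next I would pass to the limit in (\ref{eqt2}). Taking admissible test functions of the oscillating form $\hat v_\varepsilon(x)=\hat v_0(x)+\varepsilon\hat v_1(x,x/\varepsilon)+\varepsilon^2\hat v_2(x,x/\varepsilon,x/\varepsilon^2)$ with $\hat v_0\ge 0$ smooth, one checks that $\hat v_\varepsilon\ge 0$ for small $\varepsilon$ and that $\nabla\hat v_\varepsilon$ strongly multiscale converges to $\nabla\hat v_0+\nabla_y\hat v_1+\nabla_z\hat v_2$. Combining this with the weak multiscale convergence of $\nabla\hat u_\varepsilon$ and the structural hypothesis (\ref{eq33.1}) lets one pass to the limit in the bilinear form; the constraint survives because the weak $H^1$ limit $\hat u_0$ inherits $\hat u_0\ge 0$. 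After the change of variables $u_0=\hat u_0+\psi_0$, this produces the two-scale inequality (\ref{eqt12}) on the convex set $K_3$, whose unique solvability is secured by Stampacchia's lemma.

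To reach the macroscopic problem I would decouple the scales. Inserting into (\ref{eqt12}) test functions that vary only in $z$ isolates the microscopic cell problem (\ref{eqt17}); solving it gives the corrector $\chi$ and the identity (\ref{eqt00}), $\nabla_z u_2=(\nabla_z\chi)(\nabla u_0+\nabla_y u_1)$. Substituting this back and then using test functions that vary only in $y$ isolates the mesoscopic problem (\ref{eqt22}); solving it gives $\theta$ and (\ref{eqt23}), $\nabla_y u_1=(\nabla_y\theta)\nabla u_0$. Feeding both identities into (\ref{eqt12}) with $v_1=u_1$, $v_2=u_2$ collapses the double average onto the single effective matrix $A^{*}(x)=M[A(I+\nabla_z\chi)(I+\nabla_y\theta)]$ and yields (\ref{eqt27}).

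Finally I would close the argument. Since $A^{*}$ is symmetric and positive-definite, the homogenized inequality is uniquely solvable, so $u_0$ does not depend on the extracted subsequence $E'$; by the usual subsequence principle the entire family $(u_\varepsilon)_{\varepsilon>0}$ converges weakly in $H^1_0(\Omega)$ to $u_0$. The strong $L^2$ convergence is then immediate from the compact embedding $H^1_0(\Omega)\hookrightarrow L^2(\Omega)$. I expect the delicate step to be the limit passage in (\ref{eqt2}): one must simultaneously produce oscillating test functions that remain in the cone $K$ for small $\varepsilon$ and verify that the limiting constraint $u_0\ge\psi_0$ is not lost, since a variational inequality, unlike an equation, does not admit arbitrary sign-changing test directions.
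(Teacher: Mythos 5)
Your proposal is correct and follows essentially the same route as the paper: the translation $\hat u_\varepsilon=u_\varepsilon-\psi_\varepsilon$ to the fixed cone $K$, the a priori bound plus multiscale compactness, the limit passage with oscillating test functions yielding the two-scale inequality on $K_3$ with uniqueness by Stampacchia, the successive elimination of $u_2$ and $u_1$ via the cell problems (\ref{eqt17}) and (\ref{eqt22}), and the conclusion for the whole sequence by uniqueness of the solution to the homogenized inequality. The delicate point you single out (preserving the constraint under the limit passage) is exactly the step the paper treats most tersely, so your emphasis is well placed.
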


\end{document}